\newcommand{\F}{\mathbb{F}}
\newcommand{\C}{{\cal C}}
\begin{document}

\title{Dihedral codes with prescribed minimum distance} 

\author{Martino Borello\inst{1}
\and
Abdelillah Jamous \inst{2}
}

\authorrunning{Martino Borello and Abdelillah Jamous.}
 
\institute{
Universit\'e Paris 8, Laboratoire de G\'eom\'etrie, Analyse et Applications,  LAGA, Universit\'e Sorbonne Paris Nord, CNRS, UMR 7539,  F-93430, Villetaneuse, France
 \and
 Faculty of Mathematics, University of Sciences and Technology Houari Boumediene, Algiers, Algeria
}

\maketitle              

\begin{abstract}
Dihedral codes, particular cases of quasi-cyclic codes, have a nice algebraic structure which allows to store them efficiently. In this paper, we investigate it and prove some lower bounds on their dimension and mi\-nimum distance, in analogy with the theory of BCH codes. This allows us to construct dihedral codes with prescribed minimum distance. In the binary case, we present some examples of optimal dihedral codes obtained by this construction.

\keywords{Group Algebras \and Dihedral codes \and BCH bound.}
\end{abstract}

\section{Introduction}

Block codes were invented in the forties to correct errors in the
communication through noisy channels (see \cite{Huffman} for more
details), and they are used nowadays in different areas of information security. Originally, they were thought just as subsets of
(code)words of $n$ letters chosen in an alphabet $K$, which are 
far enough apart from each other with respect to the Hamming distance. However, they usually need to have more algebraic structure to be stored efficiently. By considering \emph{linear codes} of length $n$ over a finite field $K$, that is subspaces of the vector space $K^n$, we have a compact description given, for example, by the \emph{parity check matrix}, which is a 
matrix $H$ such that $c\in \C$ if and only if $cH=0$.
Such a description reduces exponentially the size of the data to be
stored with respect to general block codes. However, this reduction reveals to be insufficient in the context of code-based cryptography (\cite{McEliece,Niederreiter} and many others), where the public key is related to the parity check matrix of a code of large length and
dimension. The size
of the public key constitutes one of the main practical disadvantages in the use of code-based cryptography and many efforts have been done to reduce it by preserving the security of the system. One option may be to use codes with symmetries, like cyclic or quasi-cyclic codes  (see for example \cite{BCGO}). However, since decoding of general quasi-cyclic codes is difficult, the algebraic structure that one needs to add may also reveal to be a weakness of the system (see for example \cite{FOPT}).

A natural generalisation of cyclic codes is given by the family of group codes: a linear code $\C$ is called a $G$-\emph{code} (or
a group code) if $\C$ is a right (or left) ideal in the group algebra $KG
= \{ a=\sum_{g\in G}a_gg \mid a_g \in G\}$ where  $G$ is a finite
group. Reed Muller
codes over prime fields $\F_p$  are group codes for an elementary
abelian $p$-group $G$ \cite{Berman,Charpin}, and there are many
other remarkable optimal codes which have been detected
as group codes \cite{Bernhardt,Conway,Felde,McLH}. If $G$ is cyclic, then all right (or left) ideals of $KG$ afford only one
check equation (and then only a few data have to be stored). In the case $G$ is a general
finite group there are only particular right (or left) ideals which satisfy
this property, called \emph{checkable} codes \cite{JLLX10}. In \cite{checkable} it is proved that such codes are the duals of principal ideals and group algebras $KG$ for which all right (or left)
ideals are checkable (or equivalently principal), called \emph{code-checkable group algebras}, are characterised: $KG$ is a
code-checkable group algebra if and only if $G$ is $p$-nilpotent
with a cyclic Sylow $p$-subgroup, where $p$ is the characteristic of $K$. This is a consequence of an early result by Passman (\cite[Theorem 4.1]{Passman}). Checkable codes are asymptotically good \cite{BM,BWasym} and many optimal codes are checkable \cite[Remark 2.9]{checkable}. This seems to suggest that the family of checkable codes is worth further investigation. In particular, it is desirable to prove some bounds on the dimension and  minimum distance for checkable codes and to introduce families of checkable or principal codes with prescribed minimum distance (in analogy with BCH codes). 

To our knowledge, there are very few results concerning the parameters of group codes, both for general and particular groups. In \cite{Elia}, an algorithm for computing the dimension of general group codes is given. In a very recent paper \cite{Claro}, several relations and bounds for the dimension of principal ideals in group algebras are determined by analysing minimal polynomials of regular representations. The concatenated structure of dihedral codes is investigated in \cite{Cao}. However, we are not aware of results which allow to construct group codes with a prescribed minimum distance or explicit lower bounds on both dimension and minimum distance, even in the easiest case of dihedral codes. This paper wants to be a first contribution in this direction. In \S\ref{SectionQC} we will recall some results of the theory of quasi-cyclic codes. In \S\ref{SectionDC} we will recall the definition of dihedral codes, present some results about their algebraic structure, make some remarks about the dual codes, prove a BCH bound for principal dihedral codes, propose a definition of principal BCH-dihedral codes, consider the particular case of binary dihedral codes and give some construction of optimal codes. Finally, in \S\ref{SectionOP} we will present some open problems. In particular, an efficient decoding algorithm would be a necessary prerequisite for applications in cryptography.

\section{Quasi-cyclic codes}\label{SectionQC}
We recall in this section some definitions and known results about quasi-cyclic codes. As we will see in the next section, dihedral codes, as all group codes, form a subfamily of quasi-cyclic codes. 

Let $q$ be a power of a prime and $\F_q$ the finite field with $q$ elements. 
Let $ n \in \mathbb{N}.$ The symmetric group $S_n$ acts on the vector space $\F_q^n$ as follows:
$$v^\sigma:=(v_{\sigma^{-1}(1)},v_{\sigma^{-1}(2)},\ldots,v_{\sigma^{-1}(n)})$$
for $v:=(v_1,v_2,\ldots,v_n)\in \F_q^n$ and $\sigma\in S_n$. For a linear code $\mathcal{C}\subseteq \F_q^n$, the set of permutations such that $\mathcal{C}^\sigma:=\{c^\sigma \mid c\in \mathcal{C}\}$ is equal to $\mathcal{C}$ is a group which is called the \emph{permutation automorphism group} of $\mathcal{C}$ and which is denoted by ${\rm PAut}(\mathcal{C})$.

In this context, a remarkable transformation is the so-called \emph{shift map}, that is
$$T_n:\mathbb{F}_{q}^{n} \to \mathbb{F}_{q}^{n} \qquad c\mapsto c^{(1 \ \ldots \ n)}=(c_n,c_1,\ldots,c_{n-1}).$$
Linear codes which are invariant under the shift or its power are the so-called quasi-cyclic codes.

\begin{definition}
Let $\C\subseteq \F_q^n$ be a linear code. Suppose that $n=\ell m$, for some positive integers $\ell$ and $m$. The code $\C$ is \emph{quasi-cyclic} of index $\ell$ if $ T_n^{\ell}(\C)= \C $, that is if $$(1 \  \ldots \ n)^\ell=\prod_{j=1}^\ell(j \ \ell+j \ 2\ell+j \ \ldots \ (m-1)\ell+j)\in {\rm PAut}(\C).$$ If $ \ell=1, $ the code $\C$ is called  \emph{cyclic}.
\end{definition}

Let $R:=\F_q[x]/(x^m-1)$. We may relabel the coordinates and consider the bijective $\F_q$-linear map 
\begin{equation}\label{eq:varphi}\varphi:\F_q^n=(\mathbb{F}_{q}^{\ell})^{m}\to R^{\ell}\end{equation} 
$$(c_{11},\ldots,c_{1\ell},\ldots,c_{m1},\ldots,c_{m\ell})\mapsto (c_{11} +\cdots+ c_{m1}x^{m-1},\ldots, c_{1\ell} +\cdots+ c_{m\ell}x^{m-1}).$$
The image of a quasi-cyclic code in $R^\ell$ is an $R$-submodule. Actually, the multiplication by $x$ corresponds to the $\ell$-th power of the shift.

\begin{remark}\label{Remark1}
There is a one-to-one correspondence between the $R$-submodules of $ R^{\ell} $ and left ideals of $ {\rm Mat}_{\ell}(R)$ (which is isomorphic, as a ring, to ${\rm Mat}_{\ell}(\F_q)[x]/(x^m-1)$). This is a particular case of the Morita equivalence for modules \cite{Morita}. The explicit one-to-one map is given as follows: to any $R$-submodule $N$ of $R^\ell$ we associate the left ideal $\mathcal{I}_N$ of $ {\rm Mat}_{\ell}(R)$ composed by matrices whose rows are elements in $N$. As already observed in \cite{Barbier}, since $R$ is a commutative principal ideal ring, every $R$-submodule $N$ of $R^\ell$ has at most $\ell$ generators, so that the left ideal $\mathcal{I}_N$ is principal (it suffices to consider the matrix whose rows are the generators and eventually some zeros). So there exists a generator of $\mathcal{I}_N$ which can be seen a polynomial in ${\rm Mat}_{\ell}(\F_q)[x]/(x^m-1)$.
\end{remark}

Let $\ell$ be a positive integer, and $\alpha \in \F_{q^{\ell}}$ be a primitive element of $\F_{q^{\ell}}/\F_q$. Recall that $\{1,\alpha,\ldots,\alpha^{\ell-1}\}$ is an $\F_q$-base of the vector space $ \F_{q^{\ell}}$. The \emph{folding} is the $\F_{q}$-linear map 
 
$$\begin{array}{c}\phi:\F_{q}^{\ell}\to \F_{q^{\ell}}=\F_{q}[\alpha]\\
(a_{1},\ldots,a_{\ell})\mapsto a_{1}+a_{2}\alpha+\cdots+a_{\ell}\alpha^{\ell-1}.
\end{array}$$

\begin{definition} 
 Let $\C\subseteq \F_q^n=(\F_q^\ell)^m$ be a linear code. The \emph{folded code} of $\C$ is $\C'=\phi^m(\C)\subseteq (\F_{q^\ell})^m$. In this case, $\C$ is the \emph{unfolded code} of $\mathcal{C}'$.
\end{definition}

\begin{remark}
Note that the folded code $\C'$ of a linear code $\C$ is an $\F_q$-linear code. Moreover, $\C$ is quasi-cyclic if and only if $\C'$ is invariant under the shift $T_m$.
\end{remark}

In the next section we will use the above equivalence and the following definition many times. 

\begin{definition}
An $\F_q$-linear code $\C\subseteq (\F_{q^\ell})^m$ which is invariant under the shift $T_m$ is called \emph{$\F_q$-linear cyclic code}.
\end{definition}

Barbier \emph{et al.} define in \cite{Barbier} the analogue of BCH codes in the quasi-cyclic case. They call them \emph{quasi-BCH} codes. In \cite{GOS}, the algebraic structure of $\F_q$-linear cyclic codes over $\F_{q^\ell}$ is studied. In next section we will explore the same concepts in the context of dihedral codes.






\section{Dihedral codes}\label{SectionDC}

Let $m\geq 3$ be an integer and $$D_{2m}:=\langle \alpha,\beta \mid \alpha^{m}=1, \beta^{2}=1, \beta\alpha=\alpha^{m-1}\beta \rangle, $$
be the \emph{dihedral group} of order $2m$. The \emph{group algebra} $\F_qD_{2m}$ is the set  $$\F_qD_{2m}:=\left\{\left.\sum_{\gamma \in D_{2m}}a_{\gamma}\gamma \ \right| \  a_\gamma\in \F_q\right\},$$ which is vector space over $\F_q$ with canonical basis $\{\gamma\}_{\gamma\in D_{2m}}$. The operations of sum and multiplication by scalars are defined in the following natural way: for any $a_{\gamma}, b_{\gamma} \in \F_q$ and $c \in \F_q$ 
$$\sum_{\gamma \in D_{2m}}a_{\gamma}\gamma + \sum_{\gamma \in D_{2m}}b_{\gamma}\gamma = \sum_{\gamma \in D_{2m}}(a_{\gamma}+b_{\gamma})\gamma,$$
$$ c\cdot \left(\sum_{\gamma \in D_{2m}}a_{\gamma}\gamma\right)= \sum\limits_{\gamma \in D_{2m}}ca_{\gamma}\gamma.$$
Moreover, $\F_qD_{2m}$ is an algebra with the product
$$\left(\sum_{\gamma \in D_{2m}}a_{\gamma}\gamma\right)\bullet\left(\sum_{\gamma \in D_{2m}}b_{\gamma}\gamma\right)= \sum_{\gamma \in D_{2m}}\left(\sum_{\mu\nu=\gamma}a_{\mu}b_{\nu}\right)\gamma.$$

\begin{definition}
A \emph{dihedral code}, or a $D_{2m}$-code, is a left ideal of $\F_qD_{2m}$.
\end{definition}

As observed in \cite{BWquasi}, a linear code of length $2m$ can be seen as a $D_{2m}$-code if and only if its automorphism group contains a subgroup isomorphic to $D_{2m}$ all of whose nontrivial elements act fixed point free on the coordinates $\{1,\ldots,2m\}$. In particular, if we consider the ordering
\begin{equation}\label{eq:ordering}
D_{2m}=\{\underbrace{1}_{b_1},\underbrace{\beta}_{b_2},\underbrace{\alpha}_{b_3},\underbrace{\alpha\beta}_{b_4},\underbrace{\alpha^2}_{b_5},\underbrace{\alpha^2\beta}_{b_6},\ldots,\underbrace{\alpha^{m-1}}_{b_{2m-1}},\underbrace{\alpha^{m-1}\beta}_{b_{2m}}\},    
\end{equation}
and the $\F_q$-linear isomorphism between $\F_q^{2m}$ and $\F_qD_{2m}$ given by $e_i\mapsto b_i$ (where $\{e_i\}$ is the canonical basis of $\F_q^{2m}$), a linear code $\C\subseteq\F_q^{2m}$ is a $D_{2m}$-code if and only if
$$\alpha':=(1 \ 3 \ 5 \ \ldots \ 2m-1)(2 \ 4 \ 6 \ \ldots \ 2m)$$
and
$$\beta':=(1 \ 2)(3 \ 2m)(4 \ 2m-1)(5 \ 2m-2)\cdots(m+1 \ m+2)$$
are in ${\rm PAut}(\C)$. These elements correspond to the permutation representation of the left multiplication by $\alpha$ and by $\beta$ respectively in $\F_qD_{2m}$.
In particular, since $\alpha'=(1 \ \ldots \ 2m)^2$, a dihedral code is a quasi-cyclic code of index $2$.

From now on, we will always consider the ordering \eqref{eq:ordering} fixed and we will identify $\F_q^{2m}$ and $\F_qD_{2m}$.

\subsection{Algebraic structure}

Let $\C$ be a $D_{2m}$-code over $\F_q$. As we observed above, since $\C$ is a quasi-cyclic codes of index $2$, $\C$ is a free left module of rank $2$ over $R:=\F_q[x]/(x^m-1)$, which is a commutative principal ideal ring. As we have already seen in Remark \ref{Remark1}, this means that $\C$ has at most two generators as a module over $R$. These are also two generators of $\C$ viewed as an ideal in $\F_qD_{2m}$. We have one generator of $\C$ as an ideal in ${\rm Mat}_2(\F_q)[x]/(x^m-1)$, given by the polynomial with coefficients in the ring of matrices with first row given by the first generator and second row given by the second one.
However, it may happen that $\C$ is not principal as an ideal in $\F_qD_{2m}$.

\begin{remark}\label{Remark3}
As observed in \cite{checkable}, an early result by Passman (\cite[Theorem 4.1]{Passman}) gives us that all $D_{2m}$-codes over a field $\F_q$ of characteristic $p$ if and only if $D_{2m}$ is $p$-nilpotent with a cyclic Sylow $p$-subgroup (we recall that a group $G$ is $p$-nilpotent if it admits a normal subgroup $N$ of order coprime with $p$ and such that $G/N$ is a $p$-group). This is the case if and only if $p$ does not divide $m$. So
\begin{itemize}
    \item if $(m,q)=1$, all $D_{2m}$-codes over $\F_q$ are principal;
    \item otherwise, a $D_{2m}$-code over $\F_q$ is either principal or the sum of two principal ideals.
\end{itemize}
\end{remark}

We will study then the algebraic structure of principal left ideals in $\F_qD_{2m}$, that is principal dihedral codes. Via the map $\varphi$ defined as in \eqref{eq:varphi}, we can consider $\varphi(\C)$ inside $R^2$. The automorphism $\alpha'$ corresponds to the multiplication by $x$ in $R^2$, whereas the automorphism $\beta'$ acts on $R^2$ as follows: for $(a(x),b(x))\in R^2$,
$$(a(x),b(x))^{\beta'}=(b(x^{m-1}),a(x^{m-1})).$$
So, $\C$ is a $D_{2m}$-code if and only if $\varphi(\C)$ is an $R$-submodule of $R^2$ invariant under the action of $\beta'$, that is such that $(b(x^{m-1}),a(x^{m-1}))\in \varphi(\C)$ for all $(a(x),b(x))\in \varphi(\C)$.

If $\C$ is principal, then $\varphi(\C)$ is an $R$-submodule of $R^2$ generated, as a module, by
$$(a(x),b(x)) \ \ \text{ and } \ \ (b(x^{m-1}),a(x^{m-1})).$$

\begin{remark}
We have already mentioned the Morita correspondence between $R$-sub\-modules and left ideals in ${\rm Mat}_2(R)\cong{\rm Mat}_2(\F_q)[x]/(x^m-1)$. In this case, the left ideal $I_\C\subseteq {\rm Mat}_2(\F_q)[x]/(x^m-1)$ associated to $\C$ is the principal ideal 
$$I_\C=\left\langle\begin{array}{c}  \begin{pmatrix} a_{0} & b_{0} \\ b_{0} & a_{0} \end{pmatrix}+ \begin{pmatrix} a_{1} & b_{1} \\ b_{m-1} & a_{m-1} \end{pmatrix}x+\cdots+\begin{pmatrix} a_{m-1} & b_{m-1} \\ b_{1} & a_{1} \end{pmatrix}x^{m-1} \end{array}\right\rangle,$$
where $a(x):=a_0+a_1x+\ldots+a_{m-1}x^{m-1}$ and $b(x):=b_0+b_1x+\ldots+b_{m-1}x^{m-1}$.
\end{remark}

Considering the folding $(\F_q)^2\to \F_{q^2}=\F_q[\alpha]$, we can see the two polynomials $a(x),b(x)$ as a unique polynomial over $\F_{q^2}$ as
$$p(x):=(a_0+b_0\alpha)+(a_1+b_1\alpha)x+\ldots+(a_{m-1}+b_{m-1}\alpha)x^{m-1}$$
so that a principal dihedral code can be seen as the sum of the two $\F_q$-linear cyclic codes over $\F_{q^2}$, that is the one generated by $p(x)$ and the one generated by $\overline{p}(x^{m-1})$, where 
$$\overline{p}(x):=(b_0+a_0\alpha)+(b_1+a_1\alpha)x+\ldots+(b_{m-1}+a_{m-1}\alpha)x^{m-1}.$$

The $\F_q$-linear map $\tau:=a+b\alpha\mapsto \overline{\tau}:=b+a\alpha$ can be expressed by the following linearised polynomial:
$$\tau\mapsto L(\tau):=\left(\frac{1-\alpha^2}{\alpha^q-\alpha}\right) \tau^q+\left(\frac{\alpha^{q+1}-1}{\alpha^q-\alpha}\right)\tau.$$
so that, if 
$$p(x):=\tau_0+\tau_1x+\ldots+\tau_{m-1}x^{m-1},$$
we have
$$\overline{p}(x)=\overline{\tau_0}+\overline{\tau_1}x+\ldots+\overline{\tau_{m-1}}x^{m-1}=$$
$$\left(\frac{1-\alpha^2}{\alpha^q-\alpha}\right) p(x^{1/q})^q+\left(\frac{\alpha^{q+1}-1}{\alpha^q-\alpha}\right)p(x).$$

\begin{definition}
For a polynomial $r(x)\in \F_{q^2}[x]/(x^m-1)$, we denote by $\langle r(x)\rangle_{\F_q}$ the unfolded $\F_q$-linear cyclic code generated by $r(x)$, i.e. the unfolded of $$\{t(x)r(x) \in \F_{q^2}[x]/(x^m-1) \mid t(x)\in \F_{q}[x]\}.$$ 
\end{definition}

We can resume all the discussion in the following.
\begin{theorem}
Let $\F_{q^2}=\F_q[\alpha]$ and $\mathcal{C}$ be a principal $D_{2m}$-code over $\F_q$. It exists $p(x)\in \F_{q^2}[x]/(x^m-1)$ such that 
$$\C=\langle p(x)\rangle_{\F_q}+\langle \overline{p}(x^{m-1})\rangle_{\F_q},$$
where
$$\overline{p}(x^{m-1})=\left(\frac{1-\alpha^2}{\alpha^q-\alpha}\right) p(x^{(m-1)/q})^q+\left(\frac{\alpha^{q+1}-1}{\alpha^q-\alpha}\right)p(x^{m-1})\in \F_{q^2}[x]/(x^m-1).$$
\end{theorem}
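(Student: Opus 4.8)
The plan is to trace through the identifications already established in the section and assemble them into the claimed statement. Since $\C$ is a principal $D_{2m}$-code, Remark~\ref{Remark3} (and the preceding discussion) guarantees that $\varphi(\C)\subseteq R^2$ is the $R$-submodule generated by a single pair $(a(x),b(x))$ together with its image $(b(x^{m-1}),a(x^{m-1}))$ under $\beta'$; this is exactly the content of the displayed generation statement just before the last remark. First I would fix such a generator, set $\tau_i:=a_i+b_i\alpha\in\F_{q^2}$ via the folding $\phi\colon\F_q^2\to\F_{q^2}=\F_q[\alpha]$, and define $p(x):=\sum_{i=0}^{m-1}\tau_i x^i\in\F_{q^2}[x]/(x^m-1)$.

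Next I would unwind the correspondence between $R$-submodules of $R^2$ and $\F_q$-linear cyclic codes over $\F_{q^2}$. The folding $\phi^m$ sends the $R$-submodule generated by $(a(x),b(x))$ — i.e.\ the $\F_q[x]$-span of $(a(x),b(x))$ and its cyclic shifts — precisely to the unfolded cyclic code $\langle p(x)\rangle_{\F_q}$; this is just the observation that multiplication by $x$ on $R^2$ corresponds to the shift $T_m$ on the folded side, and that $\F_q$-linear combinations on the left are preserved. Similarly, the $R$-submodule generated by $(b(x^{m-1}),a(x^{m-1}))$ folds to $\langle \overline p(x^{m-1})\rangle_{\F_q}$, where $\overline p$ is obtained coefficient-wise by $\tau\mapsto\overline\tau=b+a\alpha$. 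Since $\varphi(\C)$ is the \emph{sum} of these two $R$-submodules, applying $\phi^m$ and then $\varphi^{-1}$ yields $\C=\langle p(x)\rangle_{\F_q}+\langle\overline p(x^{m-1})\rangle_{\F_q}$.

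It remains to verify the explicit linearised-polynomial formula for $\overline p(x^{m-1})$. Here I would first check that the map $L(\tau)=\left(\tfrac{1-\alpha^2}{\alpha^q-\alpha}\right)\tau^q+\left(\tfrac{\alpha^{q+1}-1}{\alpha^q-\alpha}\right)\tau$ is the unique $\F_q$-linear map on $\F_{q^2}$ sending $1\mapsto\alpha$ and $\alpha\mapsto 1$: an $\F_q$-linear map on $\F_{q^2}$ is a linearised polynomial $c_0\tau+c_1\tau^q$, and imposing $L(1)=\alpha$, $L(\alpha)=1$ gives the two-by-two linear system $c_0+c_1=\alpha$, $c_0\alpha+c_1\alpha^q=1$, whose solution (Vandermonde determinant $\alpha^q-\alpha\neq 0$ since $\alpha\notin\F_q$) is exactly the stated pair of coefficients; consequently $L(a+b\alpha)=a\alpha+b=\overline{a+b\alpha}$, so $L=\overline{(\ \cdot\ )}$. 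Applying $L$ coefficient-wise to $p(x^{m-1})=\sum\tau_i x^{(m-1)i}$ and using that raising a polynomial over $\F_{q^2}$ to the $q$-th power acts on coefficients as $\tau_i\mapsto\tau_i^q$ while sending $x\mapsto x^q$ (hence the substitution $x\mapsto x^{(m-1)/q}$ to correct the exponent, legitimate since $\gcd(q,m)$ can be handled through the automorphism $x\mapsto x^{q}$ of $\F_{q^2}[x]/(x^m-1)$ being invertible on the relevant support) gives the displayed identity.

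The main obstacle I anticipate is the bookkeeping in the last paragraph: making the substitution $x\mapsto x^{(m-1)/q}$ rigorous requires care, since $x^{1/q}$ only makes sense in $\F_{q^2}[x]/(x^m-1)$ after noting that $q$ acts invertibly modulo the order of $x$ (one works with the inverse of $q$ modulo $m$, or passes through the Frobenius on the whole quotient ring). Everything else is a routine transport of structure along the maps $\varphi$, $\phi^m$, and the folding, all of which were set up in the preceding pages; the only genuinely computational point is solving the small linear system for the linearised polynomial, which is immediate.
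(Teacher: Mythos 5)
Your proposal is correct and follows essentially the same route as the paper: the theorem there is explicitly a summary (``we can resume all the discussion'') of the preceding identifications --- the generator pair $(a(x),b(x))$ and its $\beta'$-image, the folding to $\F_q$-linear cyclic codes over $\F_{q^2}$, and the linearised polynomial $L(\tau)=\bigl(\tfrac{1-\alpha^2}{\alpha^q-\alpha}\bigr)\tau^q+\bigl(\tfrac{\alpha^{q+1}-1}{\alpha^q-\alpha}\bigr)\tau$ realizing $\tau\mapsto\overline{\tau}$ --- which is exactly what you reassemble, including the same interpolation at $1$ and $\alpha$. Your closing caveat about interpreting $x^{(m-1)/q}$ is handled in the paper the same way you suggest (coefficient-wise Frobenius, and Remark~\ref{RemarkR} with $r=(m'm-1)/q$ when $(m,q)=1$), so there is no gap.
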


In particular, as we have already observed in Remark \ref{Remark3}, all $D_{2m}$-codes over $\F_q$ are principal if $(m,q)=1$ and they are a sum of at most two principal $D_{2m}$-codes otherwise.

\begin{definition}
We call the polynomial $p(x)$ a \emph{generator} of the principal dihedral code. 
\end{definition}

\begin{corollary}\label{corollary-dimension}
Let $\C$ be a principal $D_{2m}$-code over $\F_q$ generated by $p(x)$. Then
$$\dim_{\F_q} \C \geq \max\{m-\deg p(x),m-\deg \overline{p}(x^{m-1})\}.$$
\end{corollary}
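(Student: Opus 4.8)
The plan is to bound the dimension of $\C$ from below by exhibiting enough linearly independent codewords coming from one of the two constituent $\F_q$-linear cyclic codes. By the Theorem just proved, $\C=\langle p(x)\rangle_{\F_q}+\langle \overline{p}(x^{m-1})\rangle_{\F_q}$, so in particular $\langle p(x)\rangle_{\F_q}\subseteq\C$ and $\langle \overline{p}(x^{m-1})\rangle_{\F_q}\subseteq\C$; hence $\dim_{\F_q}\C\geq\dim_{\F_q}\langle p(x)\rangle_{\F_q}$ and $\dim_{\F_q}\C\geq\dim_{\F_q}\langle \overline{p}(x^{m-1})\rangle_{\F_q}$, and it suffices to show $\dim_{\F_q}\langle r(x)\rangle_{\F_q}\geq m-\deg r(x)$ for a nonzero $r(x)\in\F_{q^2}[x]/(x^m-1)$, then apply it to $r=p$ and $r=\overline p(x^{m-1})$ and take the maximum.

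First I would unwind the definition of $\langle r(x)\rangle_{\F_q}$: it is the unfolding of the $\F_q$-subspace $M_r:=\{t(x)r(x)\bmod (x^m-1)\mid t(x)\in\F_q[x]\}$ of $\F_{q^2}[x]/(x^m-1)$, and since unfolding is an $\F_q$-linear isomorphism onto its image, $\dim_{\F_q}\langle r(x)\rangle_{\F_q}=\dim_{\F_q}M_r$. So the task reduces to showing $\dim_{\F_q}M_r\geq m-\deg r(x)$. For this I would consider the $\F_q$-linear map $\F_q[x]\to \F_{q^2}[x]/(x^m-1)$, $t(x)\mapsto t(x)r(x)$, whose image is $M_r$. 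Restricting to the $\F_q$-subspace $V$ of polynomials of degree $<m-\deg r(x)$ (which has $\dim_{\F_q}V=m-\deg r(x)$ when $\deg r(x)<m$, and the bound is vacuous otherwise), I claim this restriction is injective: if $t(x)r(x)\equiv 0\pmod{x^m-1}$ with $\deg t<m-\deg r$, then $\deg(t(x)r(x))<m$, so the reduction mod $x^m-1$ is the identity and $t(x)r(x)=0$ in $\F_{q^2}[x]$, forcing $t(x)=0$ since $\F_{q^2}[x]$ is a domain and $r(x)\neq 0$. Therefore $\dim_{\F_q}M_r\geq\dim_{\F_q}V=m-\deg r(x)$, as needed.

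The only genuinely delicate point is bookkeeping with the two degrees $\deg p(x)$ and $\deg \overline p(x^{m-1})$, both computed for representatives in $\F_{q^2}[x]/(x^m-1)$ of degree $<m$ (so that the degree is well defined and at most $m-1$); with that normalization both applications of the lemma above are legitimate and yield $\dim_{\F_q}\C\geq m-\deg p(x)$ and $\dim_{\F_q}\C\geq m-\deg\overline p(x^{m-1})$ simultaneously, hence the maximum. I do not expect any real obstacle here — the argument is essentially the classical lower bound $k\geq n-\deg g$ for the cyclic code generated by $g$, transported through the folding/unfolding correspondence — so the write-up is mostly a matter of stating the one-line lemma on $\dim_{\F_q}\langle r(x)\rangle_{\F_q}$ cleanly and invoking it twice.
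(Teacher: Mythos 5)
Your proof is correct and follows essentially the same route as the paper: the paper's proof simply asserts that the vectors corresponding to $p(x),xp(x),\ldots,x^{m-\deg p(x)-1}p(x)$ (and likewise the shifts of $\overline{p}(x^{m-1})$) are linearly independent in $\C$, which is exactly the injectivity of your multiplication map on polynomials of degree less than $m-\deg r(x)$. Your write-up just makes that independence argument (via degrees and $\F_{q^2}[x]$ being a domain) and the unfolding-preserves-dimension step explicit, which the paper leaves implicit.
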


\begin{proof}
This follows from the fact that the vectors in $\F_q^{2m}$ corresponding to the polynomials
$$\{p(x),xp(x),\ldots,x^{m-\deg p(x)-1}p(x)\}$$
are linearly independent, and the same holds for the ones corresponding to 
$$\{\overline{p}(x^{m-1}),x\overline{p}(x^{m-1}),\ldots,x^{m-\deg \overline{p}(x^{m-1})-1}\overline{p}(x^{m-1})\}.$$
\end{proof}

\begin{remark}\label{RemarkR}
For calculations, it may be interesting to have integer exponents. In case $(m,q)=1$, we can take $m'$ to be the inverse of $m$ modulo $q$, so that $m'm-1$ is divisible by $q$. Let $r:=(m'm-1)/q$. Then 
$$\overline{p}(x^{m-1})=\left(\frac{1-\alpha^2}{\alpha^q-\alpha}\right) p(x^r)^q+\left(\frac{\alpha^{q+1}-1}{\alpha^q-\alpha}\right)p(x^{m-1}).$$
\end{remark}

\subsection{Dual code}

In analogy with the theory of cyclic and quasi-cyclic codes, it it interesting to investigate the dual codes of dihedral codes, which are still dihedral. 

\begin{proposition}
The dual code $\C^\perp$ of a dihedral code $\C$ is a dihedral code.
\end{proposition}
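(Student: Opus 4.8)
The plan is to show that the dual code $\C^\perp$ is invariant under the same permutations $\alpha'$ and $\beta'$ that characterise dihedral codes in the fixed ordering \eqref{eq:ordering}. Recall the general fact about permutation automorphisms: for any linear code $\C\subseteq \F_q^{2m}$ and any $\sigma\in S_{2m}$, the inner product satisfies $\langle u^\sigma, v^\sigma\rangle = \langle u,v\rangle$, since permuting the coordinates of both vectors in the same way does not change the sum of products. Consequently $(\C^\sigma)^\perp = (\C^\perp)^\sigma$, and therefore ${\rm PAut}(\C) = {\rm PAut}(\C^\perp)$ for every linear code. This is the heart of the matter, and it is essentially a one-line observation.

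With this in hand, the proof is immediate. By the characterisation recalled just before the subsection on the algebraic structure, a linear code $\C\subseteq\F_q^{2m}$ is a $D_{2m}$-code (with respect to the ordering \eqref{eq:ordering}) if and only if both $\alpha'=(1\ 3\ 5\ \ldots\ 2m-1)(2\ 4\ 6\ \ldots\ 2m)$ and $\beta'=(1\ 2)(3\ 2m)(4\ 2m-1)\cdots(m+1\ m+2)$ lie in ${\rm PAut}(\C)$. Since $\C$ is dihedral, $\alpha',\beta'\in{\rm PAut}(\C)={\rm PAut}(\C^\perp)$, so $\C^\perp$ satisfies the same criterion and is therefore a $D_{2m}$-code.

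First I would state the coordinate-permutation lemma $\langle u^\sigma,v^\sigma\rangle=\langle u,v\rangle$ and deduce ${\rm PAut}(\C)={\rm PAut}(\C^\perp)$; then I would simply invoke the permutation characterisation of dihedral codes to conclude. An alternative, more algebraic route is to use the classical anti-automorphism of the group algebra $\F_qD_{2m}$ given by $\gamma\mapsto\gamma^{-1}$ (extended linearly), under which the dual of a left ideal, with respect to the bilinear form $\langle\sum a_\gamma\gamma,\sum b_\gamma\gamma\rangle=\sum a_\gamma b_\gamma$, is again a left ideal; but since the permutation picture has already been set up explicitly in the excerpt, the coordinate argument is cleaner and I would favour it. There is no real obstacle here: the only mild point to be careful about is that $\beta'$ as written is genuinely an involution fixing no coordinate (so it is the permutation coming from left multiplication by $\beta$ in the chosen ordering), which has already been recorded in the text, so nothing new needs to be checked.
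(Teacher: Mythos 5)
Your argument is correct and is essentially the paper's own proof: the paper simply invokes ${\rm PAut}(\C^\perp)={\rm PAut}(\C)$ together with the permutation characterisation of $D_{2m}$-codes via $\alpha'$ and $\beta'$, and you supply exactly this, just with the invariance of the inner product under coordinate permutations spelled out. No further comment is needed.
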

\begin{proof}
This follows trivially from the fact that ${\rm PAut}(\C^\perp)={\rm PAut}(\C)$.
\end{proof}

The dual of a principal dihedral code is not necessarily principal. But if $(m,q)=1$, as we mentioned already, all dihedral codes are principal. So it makes sense to investigate the relation between the generator of a code and a generator of its dual.

Let $p(x)$ and $q(x)$ be two polynomial in $\F_{q^2}[x]/(x^m-1)$ and let $v$ and $w$ the two vectors in $\F_q^{2m}$ corresponding to $p(x)$ and $q(x)$ respectively. We may define
$$\begin{array}{ccl}\ast : \F_{q^{2}}[x]/(x^m-1) \times \F_{q^{2}}[x]/(x^m-1) &\to& \F_q\\
(p(x),q(x))&\mapsto& p(x) \ast q(x):=\langle v,w\rangle\end{array}$$

\begin{proposition}\label{prop:dual}
Let $(m,q)=1$. If $\C$ is a principal $D_{2m}$-code generated by $p(x)$ and  $\C^{\perp}$ is a principal $D_{2m}$-code generated by $q(x)$, then
$$ p(x) \ast q(x)=0, \ p(x) \ast \overline{q}(x^{m-1})=0,$$ $$\overline{p}(x^{m-1}) \ast q(x)=0, \ \overline{p}(x^{m-1}) \ast \overline{q}(x^{m-1})=0.$$
The same holds with all the shift of $p(x)$ and $\overline{p}(x^{m-1})$.
\end{proposition}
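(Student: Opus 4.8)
The plan is to unwind definitions: the identities to be proved are exactly the statement $\C\perp\C^\perp$ with respect to the standard bilinear form on $\F_q^{2m}$, once one knows that $\ast$ is that form transported through the folding and that the polynomials listed in the statement represent actual codewords.

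First I would invoke the structure Theorem. Since $(m,q)=1$, Remark \ref{Remark3} guarantees that both $\C$ and $\C^\perp$ are principal, and (writing $p(x)$, resp. $q(x)$, for the generator of $\C$, resp. $\C^\perp$) the Theorem gives
$$\C=\langle p(x)\rangle_{\F_q}+\langle \overline{p}(x^{m-1})\rangle_{\F_q},\qquad \C^\perp=\langle q(x)\rangle_{\F_q}+\langle \overline{q}(x^{m-1})\rangle_{\F_q}.$$
By the very definition of $\langle r(x)\rangle_{\F_q}$, for every $i\in\mathbb{N}$ the polynomial $x^i r(x)$ (reduced modulo $x^m-1$) lies in $\{t(x)r(x)\mid t(x)\in\F_q[x]\}$, hence its image vector lies in $\langle r(x)\rangle_{\F_q}$. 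Consequently the vectors associated with $x^i p(x)$ and with $x^i\overline{p}(x^{m-1})$ all belong to $\C$, and those associated with $x^j q(x)$ and with $x^j\overline{q}(x^{m-1})$ all belong to $\C^\perp$.

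Second I would observe that $\ast$ is, by construction, merely the standard $\F_q$-bilinear form $\langle\cdot,\cdot\rangle$ on $\F_q^{2m}$ used to define duality, read through the $\F_q$-linear identification $\F_q^{2m}\cong\F_{q^2}[x]/(x^m-1)$ supplied by the folding. Since $\C^\perp=\{w\in\F_q^{2m}\mid \langle c,w\rangle=0\ \text{for all }c\in\C\}$, every pairing between a vector of $\C$ and a vector of $\C^\perp$ vanishes. Combining this with the previous paragraph yields $x^i p(x)\ast x^j q(x)=0$, $x^i p(x)\ast x^j\overline{q}(x^{m-1})=0$, $x^i\overline{p}(x^{m-1})\ast x^j q(x)=0$ and $x^i\overline{p}(x^{m-1})\ast x^j\overline{q}(x^{m-1})=0$ for all $i,j$. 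The case $i=j=0$ is the four displayed equalities, and the assertion "the same holds with all the shifts of $p(x)$ and $\overline{p}(x^{m-1})$" is precisely the case of general $i$; indeed, since permutations preserve $\langle\cdot,\cdot\rangle$ and multiplication by $x$ is the permutation $\alpha'\in{\rm PAut}(\C)={\rm PAut}(\C^\perp)$, one may even reduce to shifting a single argument.

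There is essentially no obstacle: the content is purely definitional. The only points deserving a word of care are that $\ast$ is well defined (it is, because elements of $\F_{q^2}[x]/(x^m-1)$ have unique representatives of degree $<m$, so the associated vector is canonical), and that $\ast$ is only $\F_q$-bilinear, not $\F_{q^2}$-bilinear — which is why one works with the shifts $x^i\in\F_q[x]$ rather than with arbitrary scalar multiples from $\F_{q^2}$, and why the four "mixed" identities involving $\overline p$ and $\overline q$ are needed in addition to $p\ast q=0$.
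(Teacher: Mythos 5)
Your proof is correct and follows the same route as the paper, which simply states that the result ``is clear from the definition of $\ast$'': you are unwinding exactly that definitional argument, noting that $p(x)$, $\overline{p}(x^{m-1})$ and their shifts correspond to codewords of $\C$, that $q(x)$, $\overline{q}(x^{m-1})$ correspond to codewords of $\C^{\perp}$, and that $\ast$ is the standard form defining duality. Your version is just a more detailed write-up of the paper's one-line proof.
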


\begin{proof}
This is clear from the definition of $\ast$.
\end{proof}

\begin{remark}
At least two questions stand open in this context: the conditions in Proposition \ref{prop:dual} are only necessary. It would be very interesting to find sufficient conditions for a polynomial $q(x)$ to be a generator of the dual. We may add the orthogonality with all the shift of $p(x)$ and $\overline{p}(x^{m-1})$, but this would still be not enough. A polynomial $q(x)$ satisfying all these relations would generate a subcode of $\C^\perp$, but not necessarily the whole dual. In fact, there is an argument on the dimension missing. Secondly, it would be nice to give some relations with the usual product of polynomials (as in the cyclic codes case) and not with the $\ast$ product.
\end{remark}

For dihedral codes over fields of characteristic $2$, a nice relation holds.

\begin{proposition} If $q$ is a power of $2$, then $\langle\overline{p}(x^{m-1})\rangle_{\F_q}\subseteq \langle p(x)\rangle_{\F_q}^\perp$. In particular, the code generated by $p(x)$ is contained in $\langle p(x)\rangle_{\F_q}+\langle p(x)\rangle_{\F_q}^\perp$.
\end{proposition}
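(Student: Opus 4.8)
The plan is to move everything into the group algebra $\F_qD_{2m}$, identified with $\F_q^{2m}$ via the ordering \eqref{eq:ordering}, and to use the classical description of the Euclidean form in terms of the group‑algebra anti‑involution; the hypothesis that $q$ is even will be needed at exactly one point. To begin with, for $v=\sum_{\gamma\in D_{2m}}v_{\gamma}\gamma\in\F_qD_{2m}$ set $\widehat v:=\sum_{\gamma}v_{\gamma}\gamma^{-1}$, which is an $\F_q$‑linear anti‑automorphism of $\F_qD_{2m}$. A direct coefficient count gives $\langle u,v\rangle=(u\widehat v)_{1}$, where $(\,\cdot\,)_{1}$ denotes the coefficient of the identity element $1$ of $D_{2m}$ (indeed the coefficient of $1$ in $u\widehat v$ is $\sum_{\gamma}u_{\gamma}v_{\gamma}$); and since left multiplication by a group element merely permutes the canonical basis, $\langle\gamma u,\gamma v\rangle=\langle u,v\rangle$ for every $\gamma\in D_{2m}$.

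Next I would pin down the two constituents of the code inside $\F_qD_{2m}$. Let $R=\F_q\langle\alpha\rangle\subseteq\F_qD_{2m}$ be the cyclic group subalgebra and let $g=a(\alpha)+b(\alpha)\beta$ be the element of $\F_qD_{2m}$ corresponding to $p(x)$ under the folding, equivalently to the pair $(a(x),b(x))$ (here $\alpha$ is the group generator of $D_{2m}$). Unwinding the definitions of $\varphi$, of the folding, and of $\langle r(x)\rangle_{\F_q}$, one checks that $\langle p(x)\rangle_{\F_q}$ corresponds to the left $R$‑module $Rg$, while the relation $\beta\alpha^{j}=\alpha^{-j}\beta$ shows that
$$\beta g=b(\alpha^{-1})+a(\alpha^{-1})\beta$$
is exactly the element attached to $(b(x^{m-1}),a(x^{m-1}))$, i.e.\ to $\overline p(x^{m-1})$; hence $\langle\overline p(x^{m-1})\rangle_{\F_q}$ corresponds to $R\beta g$. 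By $\F_q$‑bilinearity and the invariance under left multiplication by powers of $\alpha$, it then suffices to prove $\langle\alpha^{k}\beta g,g\rangle=0$ for every $k$.

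The heart of the matter is the computation of $g\widehat g$. From $\widehat g=a(\alpha^{-1})+b(\alpha)\beta$, expanding the product and simplifying with $\beta\alpha^{j}=\alpha^{-j}\beta$ and $\beta^{2}=1$ gives
$$g\widehat g=\big(a(\alpha)a(\alpha^{-1})+b(\alpha)b(\alpha^{-1})\big)+2\,a(\alpha)b(\alpha)\,\beta .$$
This is the single place where $q$ being a power of $2$ is used: the $\beta$‑component $2a(\alpha)b(\alpha)\beta$ vanishes, so $h:=g\widehat g$ lies in the commutative subalgebra $R$. Consequently $\langle\alpha^{k}\beta g,g\rangle=(\alpha^{k}\beta\, g\widehat g)_{1}=(\alpha^{k}\beta h)_{1}$, and writing $h^{\sigma}$ for the image of $h$ under $\alpha\mapsto\alpha^{-1}$ we have $\beta h=h^{\sigma}\beta$, so $\alpha^{k}\beta h=(\alpha^{k}h^{\sigma})\beta$ is an $\F_q$‑combination of the reflections $\alpha^{j}\beta$, none of which equals $1$; hence its identity‑coefficient is $0$. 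This yields $\langle\overline p(x^{m-1})\rangle_{\F_q}\subseteq\langle p(x)\rangle_{\F_q}^{\perp}$, and the last assertion is then immediate, since by the structure theorem above the code generated by $p(x)$ equals $\langle p(x)\rangle_{\F_q}+\langle\overline p(x^{m-1})\rangle_{\F_q}\subseteq\langle p(x)\rangle_{\F_q}+\langle p(x)\rangle_{\F_q}^{\perp}$.

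The main obstacle is not the computation itself but the bookkeeping of the second step: one must verify carefully that, under the chain of identifications $\F_qD_{2m}\cong\F_q^{2m}\cong R^{2}\cong\F_{q^2}[x]/(x^m-1)$, the code $\langle p(x)\rangle_{\F_q}$ really is the full $R$‑module $Rg$ (and not merely $\F_q g$), and that $\beta g$ is precisely the group‑algebra element underlying $\overline p(x^{m-1})$. Once this dictionary is fixed, the argument reduces to the one‑line involution computation above, with characteristic $2$ entering only through the cancellation $2a(\alpha)b(\alpha)\beta=0$.
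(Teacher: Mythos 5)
Your proof is correct, but it takes a more structural route than the paper's. The paper simply unfolds $p(x)$ and $\overline{p}(x^{m-1})$ into their coordinate vectors $v=(a_0,b_0,a_1,b_1,\ldots,a_{m-1},b_{m-1})$ and $w=(b_0,a_0,b_{m-1},a_{m-1},\ldots,b_1,a_1)$, observes that $\langle v,w\rangle=2(a_0b_0+a_1b_{m-1}+a_{m-1}b_1+\cdots)=0$ in characteristic $2$, and dismisses the shifts with a one-line remark. You instead work intrinsically in $\F_qD_{2m}$: you identify $\langle p(x)\rangle_{\F_q}$ and $\langle\overline{p}(x^{m-1})\rangle_{\F_q}$ with $Rg$ and $R\beta g$ for $g=a(\alpha)+b(\alpha)\beta$, use the identity $\langle u,v\rangle=(u\widehat{v})_1$ for the anti-involution $u\mapsto\widehat{u}$, and reduce everything to the single computation $g\widehat{g}=a(\alpha)a(\alpha^{-1})+b(\alpha)b(\alpha^{-1})+2\,a(\alpha)b(\alpha)\beta$, whose $\beta$-component is exactly what dies in characteristic $2$; the support argument (an element of $R\beta$ has zero identity coefficient) then handles all shifts of both generators at once. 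The underlying cancellation is the same factor of $2$ as in the paper, but your formulation explains conceptually where the hypothesis enters ($g\widehat{g}$ falls into the cyclic subalgebra), treats the shifts uniformly rather than by appeal to ``the same argument,'' and would transfer verbatim to statements of the form $\F_qD_{2m}g\subseteq Rg+(Rg)^\perp$ in other group-algebra settings; the paper's coordinate computation is shorter and entirely elementary. Your verification of the dictionary---that $\langle p(x)\rangle_{\F_q}$ unfolds to the full module $Rg$ and that $\beta g=b(\alpha^{-1})+a(\alpha^{-1})\beta$ is precisely the element underlying $\overline{p}(x^{m-1})$---is correct and is exactly the bookkeeping the paper leaves implicit.
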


\begin{proof}
Recall that if $p(x)$ corresponds to the vector 
$$v=(a_0,b_0,a_1,b_1,\ldots,a_{m-1},b_{m-1}),$$
then $\overline{p}(x^{m-1})$ corresponds to the vector 
$$w=(b_0,a_0,b_{m-1},a_{m-1},\ldots,b_1,a_1),$$
so that 
$$\langle v,w\rangle=2(a_0b_0+a_1b_{m-1}+a_{m-1}b_1+\ldots)=0$$
in any field of characteristic $2$. Clearly, the same argument applies to $x^ip(x)$.
\end{proof}

\begin{remark}
In many examples, we get the equality $\langle\overline{p}(x^{m-1})\rangle_{\F_q}=\langle p(x)\rangle_{\F_q}^\perp$. However, we could not find a general property of $p(x)$ which guarantees it. Again, there is an argument on the dimension missing.
\end{remark}

\subsection{Minimum distance bounds}

Let $t$ be the order of $q^2$ modulo $m$, and let $\omega$ be a primitive $m$-th root of unity in $\F_{q^{2t}}$. If $\delta-1$ consecutive powers of $\omega$ are roots of both $p(x)$ and $\overline{p}(x^{m-1})$, then a BCH bound can be proved for the code generated by $p(x)$ and $\overline{p}(x^{m-1})$.

\begin{theorem}[BCH bound for principal dihedral codes]
If $\delta-1$ con\-secutive powers of $\omega$ are roots of both $p(x)$ and $\overline{p}(x^{m-1})$, then the dihedral code $\C$ generated by $p(x)$ has minimum distance at least $\delta$.
\end{theorem}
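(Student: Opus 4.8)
The plan is to reduce the statement to the classical BCH bound for $\F_q$-linear cyclic codes over $\F_{q^2}$, applied twice. By the Theorem recalled above, the dihedral code $\C$ generated by $p(x)$ decomposes as $\C = \langle p(x)\rangle_{\F_q} + \langle \overline{p}(x^{m-1})\rangle_{\F_q}$, the sum of two $\F_q$-linear cyclic codes over $\F_{q^2}$ of length $m$. The minimum distance of $\C$ is at least the minimum of the two minimum distances only if we are careful: in general $d(A+B)$ can be smaller than $\min\{d(A),d(B)\}$. So the first point to address is that, actually, a \emph{nonzero} codeword of $\C$, when written in the folded picture as an element of $\F_{q^2}[x]/(x^m-1)$, is a sum $f(x) + g(x)$ where $f \in \langle p(x)\rangle$ (the ideal generated over $\F_q[x]$, before unfolding) and $g \in \langle \overline{p}(x^{m-1})\rangle$. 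The key observation is that both $f(x)$ and $g(x)$ — hence their sum — have $\omega^i,\omega^{i+1},\ldots,\omega^{i+\delta-2}$ among their roots, because $\omega^i,\ldots,\omega^{i+\delta-2}$ are common roots of $p(x)$ and of $\overline{p}(x^{m-1})$ by hypothesis, and being a root is preserved under taking $\F_q[x]$-multiples and sums in $\F_{q^2}[x]/(x^m-1)$. So every codeword $h(x) = f(x)+g(x)$ of the folded code satisfies $h(\omega^{i+j}) = 0$ for $j = 0,\ldots,\delta-2$.

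Next I would invoke the Vandermonde argument exactly as in the classical proof of the BCH bound. Suppose $h(x) = \sum_{k} h_k x^k$ is a nonzero codeword of weight $w \leq \delta - 1$, with support $\{k_1,\ldots,k_w\}$ and nonzero coefficients $h_{k_1},\ldots,h_{k_w} \in \F_{q^2}$. The $\delta - 1 \geq w$ equations $h(\omega^{i+j}) = 0$, $j = 0,\ldots,\delta-2$, restricted to any $w$ of them, give a homogeneous linear system whose coefficient matrix is $\bigl(\omega^{(i+j)k_s}\bigr)_{j,s}$; factoring $\omega^{i k_s}$ out of column $s$, this is $\mathrm{diag}(\omega^{ik_s}) $ times a $w \times w$ Vandermonde matrix in the distinct nodes $\omega^{k_1},\ldots,\omega^{k_w}$ (distinct since $\omega$ is a primitive $m$-th root of unity and the $k_s$ are distinct residues mod $m$). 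The Vandermonde determinant is nonzero, so the only solution is $h_{k_1} = \cdots = h_{k_w} = 0$, contradicting $h \neq 0$. Hence every nonzero folded codeword has weight at least $\delta$; and since the folding $\phi^m$ is an $\F_q$-linear bijection onto the dihedral code that moreover preserves the block structure, a folded word of weight $w$ over $\F_{q^2}$ unfolds to a word of Hamming weight at least $w$ over $\F_q$ (a nonzero symbol of $\F_{q^2}$ unfolds to a nonzero pair in $\F_q^2$). Therefore $\C$ has minimum distance at least $\delta$.

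The one genuinely delicate point is the very first one: making sure that the sum $\langle p(x)\rangle_{\F_q} + \langle \overline{p}(x^{m-1})\rangle_{\F_q}$ really does consist, at the folded level, of sums $f(x)+g(x)$ with $f,g$ in the respective $\F_{q^2}[x]/(x^m-1)$-multiples-by-$\F_q[x]$ of the two generators, so that the common-root property propagates to all of $\C$. This is essentially the content of the definition of $\langle r(x)\rangle_{\F_q}$ together with the decomposition Theorem, so it is not hard, but it should be spelled out, because it is exactly here that one uses that $\F_q[x]$-scalar multiplication (not just $\F_{q^2}[x]$-multiplication) and that evaluation at $\omega^{i+j} \in \F_{q^{2t}}$ commutes with these operations. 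Everything else — the Vandermonde step and the weight comparison under folding — is routine.
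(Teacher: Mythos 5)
Your proposal is correct and follows essentially the same route as the paper: write a folded codeword as $t_1(x)p(x)+t_2(x)\overline{p}(x^{m-1})$, observe that the $\delta-1$ common consecutive roots $\omega^b,\ldots,\omega^{b+\delta-2}$ are then roots of every folded codeword, and conclude via the unfolding weight comparison. The only difference is cosmetic: where the paper notes that the folded code sits inside the $\F_{q^2}$-BCH code generated by ${\rm lcm}\{M_{\omega^b}(x),\ldots,M_{\omega^{b+\delta-2}}(x)\}$ and cites the classical BCH bound, you inline that bound's Vandermonde proof, which also quietly dispenses with any $\F_{q^2}$-linearity concern.
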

\begin{proof}
A codeword $c(x)$ of the folded $\C\subseteq \F_{q^2}^m$ is of the form
$$c(x)=t_1(x)p(x)+t_2(x)\overline{p}(x^{m-1}),$$
for $t_1(x),t_2(x)\in \F_q[x]$. As $\delta-1$ consecutive powers of $\omega$ are roots of both $p(x)$ and $\overline{p}(x^{m-1})$, we have $c(x)= c'(x)g(x)$ where $c'(x) \in \F_{q^2}[x]$ and 
$$g(x)= {\rm lcm} \{M_{\omega^b}(x),M_{\omega^{b+1}}(x),\ldots,M_{\omega^{b+\delta-2}}(x)\},$$
where $M_{\omega^i}(x)$ is the minimal polynomial of $\omega^i$ over $\F_{q^2}$. It follows that the folded $\C$ is a subcode of the BCH code genereted by $g(x)$, which has minimum distance at least $\delta$ by the classical BCH bound. Since a nonzero coordinate in a codeword of the folded $\C$ corresponds to at least a nonzero coordinate of the unfolded codeword in $\C$, the minimum distance of $\C$ is at least $\delta$.
\end{proof}

Let us consider the case $(m,q)=1$ and let $r$ be defined as in Remark \ref{RemarkR}. For many applications, it is suitable to consider codes with a prescribed minimum distance. This can be achieved by imposing that
$\delta-1$ consecutive powers of $\omega$, say $\omega^b,\omega^{b+1},\ldots,\omega^{b+\delta-2}$, together with their inverse and their $r$-th powers, are roots of $p(x)$, which guarantees that the code generated has minimum distance at least $\delta$.

\begin{definition}
Let $(m,q)=1$. A dihedral code $\mathcal{C}\subseteq\F_q^{2m}$ is a \emph{BCH-dihedral code} of prescribed minimum distance $\delta$ if it exists an integer $b$ such that its generator is 
$$p(x)={\rm lcm}\left\{\begin{array}{c}
M_{\omega^b}(x),M_{\omega^{b+1}}(x),\ldots,M_{\omega^{b+\delta-2}}(x)\\
M_{\omega^{-b}}(x),M_{\omega^{-b-1}}(x),\ldots,M_{\omega^{-b-\delta+2}}(x)\\
M_{\omega^{br}}(x),M_{\omega^{br+r}}(x),\ldots,M_{\omega^{br+\delta r-2r}}(x)
\end{array}
\right\}$$
where $r=(m'm-1)/q$, with $m'$ being the inverse of $m$ modulo $q$, and $M_{\omega^i}(x)$ is the minimal polynomial of $\omega^i$ over $\F_{q^2}$.
\end{definition}

\begin{remark}
The definition above guarantees to have minimum distance at least $\delta$. Anyway, it may probably be improved by analysing the relations between the cyclotomic cosets of the different roots. This reveals to be simpler in the binary case, that we will consider in the next subsection.
\end{remark}

\subsection{Binary case}

Let us consider now $D_{2m}$-codes over $\F_2$, with $m\geq 3$ odd.
The binary case is parti\-cularly interesting, since $\alpha^{2+1}-1=0$. In this case $$\overline{p}(x^{m-1})=\alpha p(x^{(m-1)/2})^2,$$
so that if $Z(p)$ is the set of zeros of $p(x)$, then $Z(p)^{2/(m-1)}$ is the set of zeros of $\overline{p}(x^{m-1})$. In this case, we are considering $p(x)$ and $\overline{p}(x^{m-1})$ as polynomials in $\F_4[x]$ and not in the quotient ring.

We consider an $m$-th root of unity $\omega$ in $\F_{4^t}$, where $t$ the order of $4$ modulo $m$. The irreducible divisors of $x^m-1$ are associated to the cyclotomic cosets $C_i=\{i,4i \bmod m,4^2i\bmod m,\ldots \}$ (this is classical in the theory of cyclic codes - see for example \cite{Huffman}): actually, if $M_{\omega^i}(x)$ is the polynomial associated to $C_i$ (which is the minimal polynomial of $\omega^i$), its zeros are $Z(M_{\omega^i}(x))=\{\omega^j\mid j \in C_i\}$. 

\begin{proposition}
The following conditions are equivalent:
\begin{itemize}
    \item[{\rm a)}] $Z(M_{\omega^i})^{(m-1)/2}=Z(M_{\omega^i})$ for all $i\in \{0,\ldots,m-1\}$;
    \item[{\rm b)}] $\frac{m-1}{2}C_i=C_i$ for all $i\in \{0,\ldots,m-1\}$;
    \item[{\rm c)}] it exists an integer $s$ such that $2^{2s+1}=-1\bmod m$.
\end{itemize}
If $m$ is prime, then {\rm a)}, {\rm b)} and {\rm c)} are equivalent to 
\begin{itemize}
    \item[{\rm d)}] $s_2(m)\equiv 2\bmod 4$, where $s_2(m)$ is the order of $2$ modulo $m$.
\end{itemize}
\end{proposition}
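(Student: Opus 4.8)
The plan is to carry everything down to the level of exponents modulo $m$, prove the chain a) $\Leftrightarrow$ b) $\Leftrightarrow$ c) there, and then, when $m$ is prime, read off c) $\Leftrightarrow$ d) from the structure of the cyclic group $(\mathbb{Z}/m\mathbb{Z})^{\times}$. For a) $\Leftrightarrow$ b), since $\omega$ is a primitive $m$-th root of unity, the map $j\mapsto\omega^j$ is a bijection from $\mathbb{Z}/m\mathbb{Z}$ onto the group of $m$-th roots of unity in $\F_{4^t}$, and it carries the subset $C_i$ onto $Z(M_{\omega^i})$; since $(\omega^j)^{(m-1)/2}=\omega^{j(m-1)/2}$ with the exponent read modulo $m$, it also carries $\frac{m-1}{2}C_i$ onto $Z(M_{\omega^i})^{(m-1)/2}$. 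Hence $Z(M_{\omega^i})^{(m-1)/2}=Z(M_{\omega^i})$ holds if and only if $\frac{m-1}{2}C_i=C_i$, and quantifying over $i$ gives a) $\Leftrightarrow$ b).

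For b) $\Leftrightarrow$ c), the key elementary remark is that $m$ odd forces $2\cdot\frac{m-1}{2}\equiv -1\pmod m$, so $\frac{m-1}{2}$ is a unit modulo $m$ (with inverse $-2$) and multiplication by it permutes $\mathbb{Z}/m\mathbb{Z}$; moreover each $C_i$ is by construction a single orbit of multiplication by $4$, hence is stable under multiplication by every power $4^{s}$. If c) holds, say $2^{2s+1}\equiv -1\pmod m$, then $\frac{m-1}{2}\equiv -2^{-1}\equiv 4^{s}\pmod m$, so multiplication by $\frac{m-1}{2}$ acts on each $C_i$ as multiplication by $4^{s}$; being a bijection taking $C_i$ into itself it gives $\frac{m-1}{2}C_i=C_i$ for all $i$, i.e. b). Conversely, applying b) with $i=1$ and using that $C_1$ is exactly the cyclic subgroup $\langle 4\rangle\le(\mathbb{Z}/m\mathbb{Z})^{\times}$ generated by $4$, one gets $\frac{m-1}{2}\in\frac{m-1}{2}C_1=C_1=\langle 4\rangle$, hence $\frac{m-1}{2}\equiv 4^{s}\pmod m$ for some $s\ge 0$, and multiplying by $2$ yields $-1\equiv 2^{2s+1}\pmod m$, which is c). (In passing, this shows that b) for all $i$ is already forced by its instance $i=1$.)

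Finally, assume $m$ prime, so $(\mathbb{Z}/m\mathbb{Z})^{\times}$ is cyclic of even order $m-1$; put $d:=s_2(m)=\mathrm{ord}_m(2)$, so that $\langle 2\rangle$ is cyclic of order $d$ and $\langle 4\rangle=\langle 2^2\rangle$ has order $d/\gcd(d,2)$. The set of residues $2^{2s+1}$, $s\ge 0$, is precisely the coset $2\langle 4\rangle\subseteq\langle 2\rangle$, so c) asserts $-1\in 2\langle 4\rangle$. I would then invoke the standard fact that a finite cyclic group has an element of order $2$ exactly when its order is even, this element of $(\mathbb{Z}/m\mathbb{Z})^{\times}$ being $-1$. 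If $d$ is odd, then $\langle 4\rangle=\langle 2\rangle$, so $2\langle 4\rangle=\langle 2\rangle$ has odd order and cannot contain $-1$: c) fails, and d) fails too since $d\not\equiv 2\pmod 4$. If $d$ is even, then $\langle 4\rangle$ has index $2$ in $\langle 2\rangle$, so $2\langle 4\rangle=\langle 2\rangle\setminus\langle 4\rangle$, and $-1\in 2\langle 4\rangle$ amounts to $-1\in\langle 2\rangle$ and $-1\notin\langle 4\rangle$, i.e. to $d$ even and $d/2$ odd, i.e. to $d\equiv 2\pmod 4$; this is d). Hence c) $\Leftrightarrow$ d).

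I do not expect a real obstacle: the argument is entirely elementary, and the only points needing care are bookkeeping ones — that $\frac{m-1}{2}$ is genuinely invertible modulo $m$ so that it acts as a permutation, that $C_i$ is stable under all powers of $4$ (it is an orbit, not merely closed under one multiplication by $4$), and, in the prime case, keeping track of the parities of $d$ and of $d/2$ to decide whether $-1$ lies in $\langle 4\rangle$ or in the nontrivial coset $2\langle 4\rangle$.
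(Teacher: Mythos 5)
Your proof is correct. For a) $\Leftrightarrow$ b) and b) $\Leftrightarrow$ c) you follow essentially the same route as the paper: the exponent-level bijection $j\mapsto\omega^j$ identifying $Z(M_{\omega^i})$ with $C_i$, and the observation that $2^{2s+1}\equiv -1$ is the same as $\frac{m-1}{2}\equiv 4^s\pmod m$; your version is in fact slightly tidier than the paper's (which deduces $\frac{m-1}{2}C_i=C_i$ from a single element $\frac{m-1}{2}i\in C_i$ without spelling out the orbit/invertibility argument that you make explicit), and your remark that the instance $i=1$ of b) already forces all the others is a nice bonus not stated in the paper. Where you genuinely diverge is c) $\Leftrightarrow$ d): the paper argues directly on orders, showing $s_2(m)\mid 2(2s+1)$ but $s_2(m)\nmid 2s+1$ to get $s_2(m)\equiv 2\bmod 4$, and conversely noting that $2^{2s+1}$ is a root of $x^2-1$ over the field $\F_m$, hence equals $-1$ since it cannot be $1$; you instead read c) as the statement $-1\in 2\langle 4\rangle$ inside the cyclic group $(\mathbb{Z}/m\mathbb{Z})^{\times}$ and settle it by comparing the subgroup $\langle 4\rangle$ with $\langle 2\rangle$ and using that $-1$ is the unique involution. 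Both arguments use primality exactly once (the paper for the two-root bound on $x^2-1$, you for cyclicity of the unit group); the paper's is more computational and self-contained, while yours is more structural and makes transparent exactly which coset $-1$ must lie in, at the cost of invoking the uniqueness of the element of order $2$.
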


\begin{proof}
{\rm a)}$\Leftrightarrow${\rm b)}: if $Z(M_{\omega^i})^{(m-1)/2}=Z(M_{\omega^i})$, then it exists $j\in C_i$ such that $\omega^{i(m-1)/2}=\omega^j$, which means that the class $C_i$ is sent to $C_i$ by multiplying by $\frac{m-1}{2}$. The vice versa is trivial.\\
{\rm b)}$\Rightarrow${\rm c)}: since $\frac{m-1}{2}C_1=C_1$, it exists $s$ such that $\frac{m-1}{2}=4^s\bmod m$. Then $2^{2s+1}=-1\bmod m$.\\
{\rm c)}$\Rightarrow${\rm b)}: $2^{2s+1}=-1\bmod m$ implies ($(m,2)=1$ so that $2$ is invertible) that $4^s=\frac{m-1}{2}\bmod m$. This means that for all $i\in \{0,\ldots,m-1\}$, we have $\frac{m-1}{2}i=4^si\in C_i$, which implies $\frac{m-1}{2}C_i=C_i$.\\
{\rm c)}$\Rightarrow${\rm d)}: Since $2^{4s+2}=1\bmod m$ and $2^{2s+1}=-1\bmod m$, then $s_2(m)$ divides $2(2s+1)$ and $s_2(m)$ does not divide $2s+1$. So $2$ divides $s_2(m)$. If $4$ divides $s_2(m)$, then $4$ divides $4s+2$, which is not true. So $s_2(m)\equiv 2 \bmod 4$.\\
{\rm d)}$\Rightarrow${\rm c)}: If $s_2(x)=4s+2$, then $2^{2s+1}$ is a root of $x^2-1\in \F_m[x]$, which has only two solutions. The only possible solution in this case is $-1$ (otherwise the order of $2$ would be smaller than $4s+2$). 
\end{proof}

\begin{remark}
The set of primes $\mathcal{P}:=\{m\mid s_2(m)\equiv 2\bmod 4\}=\{3,11,19,43,\ldots\}$ is infinite (its density in the set of primes is 7/24 \cite{Moree}). 
\end{remark}

\begin{theorem}\label{Theorem-binary}
If it exists an integer $s$ such that $2^{2s+1}=-1\bmod m$ $($in particular if $m$ is prime and $s_2(m)\equiv 2\bmod 4)$, then, for all integers $\delta\geq 2$ and $b\geq 0$, the binary $D_{2m}$-code generated by 
$$p(x)={\rm lcm}\{
M_{\omega^b}(x),M_{\omega^{b+1}}(x),\ldots,M_{\omega^{b+\delta-2}}(x)\}$$
is a principal BCH-dihedral code with minimum distance $d\geq \delta$ and dimension $k\geq m-\deg p(x)$.
\end{theorem}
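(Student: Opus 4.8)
The plan is to establish, in this order, that $\C$ is principal, that the $\delta-1$ consecutive powers $\omega^{b},\dots,\omega^{b+\delta-2}$ are roots of both $p(x)$ and $\overline{p}(x^{m-1})$, and finally that $p(x)$ may be taken as the generator of a BCH-dihedral code of prescribed distance $\delta$; the two claimed bounds will then drop out of results already available. Everything will reduce to the combinatorics of the $4$-cyclotomic cosets modulo $m$: the working hypothesis $2^{2s+1}\equiv -1\bmod m$ is exactly condition~c) of the preceding Proposition, so I would use it through its equivalent forms a) and b), together with the identity $\frac{m-1}{2}\equiv 4^{s}\bmod m$ recorded in that proof.

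First, since $m$ is odd we have $(m,2)=1$, so Remark~\ref{Remark3} says that every $D_{2m}$-code over $\F_{2}$ is principal; in particular $\C$ is principal, and by the structure theorem above its folded version equals $\langle p(x)\rangle_{\F_{2}}+\langle\overline{p}(x^{m-1})\rangle_{\F_{2}}$, with $\overline{p}(x^{m-1})=\alpha\,p\big(x^{(m-1)/2}\big)^{2}$ in the binary setting. Next I would verify that, for each $k\in\{0,\dots,\delta-2\}$, the element $\omega^{b+k}$ is a root of both generators: it is a root of $p(x)$ by construction, and $\overline{p}(x^{m-1})$ evaluated at $\omega^{b+k}$ equals $\alpha\,p\big(\omega^{(b+k)(m-1)/2}\big)^{2}$, which vanishes because condition b) gives $\frac{m-1}{2}C_{b+k}=C_{b+k}$, hence $(b+k)\frac{m-1}{2}\bmod m\in C_{b+k}$, so that $\omega^{(b+k)(m-1)/2}$ is a zero of $M_{\omega^{b+k}}$, which divides $p(x)$. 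Thus $\omega^{b},\dots,\omega^{b+\delta-2}$ are $\delta-1$ consecutive powers of $\omega$ that are roots of both $p(x)$ and $\overline{p}(x^{m-1})$; the BCH bound for principal dihedral codes then gives $d\geq\delta$, and Corollary~\ref{corollary-dimension} gives $k=\dim_{\F_{2}}\C\geq m-\deg p(x)$.

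The remaining point is to recognise $\C$ as a BCH-dihedral code of prescribed minimum distance $\delta$, and this is where I expect the delicate reasoning to lie. The idea is that, in characteristic $2$, the three families of roots in the definition of a BCH-dihedral code collapse onto the single run $\omega^{b},\dots,\omega^{b+\delta-2}$: the coefficient $\frac{\alpha^{q+1}-1}{\alpha^{q}-\alpha}$ multiplying $p(x^{m-1})$ vanishes when $q$ is a power of $2$ (because $\alpha^{3}=1$ in $\F_{4}$), so that $\overline{p}(x^{m-1})$ is governed purely by the $r$-th power datum with $r=(m-1)/2$, while $r\equiv 4^{s}\bmod m$ forces $M_{\omega^{(b+k)r}}=M_{\omega^{b+k}}$ for every $k$. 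Consequently both the inverse-power row and the $r$-th power row of the general definition are subsumed, and imposing merely the roots $\omega^{b},\dots,\omega^{b+\delta-2}$ — that is, taking $p(x)={\rm lcm}\{M_{\omega^{b}},\dots,M_{\omega^{b+\delta-2}}\}$ — already produces a code of BCH-dihedral type with guaranteed distance $\delta$. The point I would have to argue carefully is precisely this last reduction: that under the stated hypothesis nothing is lost by discarding those two rows, i.e.\ that the binary case genuinely realises the improvement anticipated in the remark preceding this subsection.
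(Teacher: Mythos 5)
Your proposal is correct and follows essentially the same route as the paper: the paper's (very terse) proof observes that under the hypothesis $p(x)$ divides $\overline{p}(x^{m-1})=\alpha\,p(x^{(m-1)/2})^{2}$, so the $\delta-1$ consecutive roots are shared and the BCH bound for principal dihedral codes together with Corollary~\ref{corollary-dimension} gives both estimates, which is exactly your coset argument $\frac{m-1}{2}C_{b+k}=C_{b+k}$ specialised to the roots $\omega^{b},\dots,\omega^{b+\delta-2}$. Your closing discussion of why the inverse-power and $r$-th power rows of the BCH-dihedral definition become superfluous in characteristic $2$ is accurate and in fact supplies detail the paper leaves implicit.
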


\begin{proof}
It follows from the fact that, in this case, $p(x)$ divides $\overline{p}(x^{m-1})$: actually, all roots of $p(x)$ are roots of $\overline{p}(x^{m-1})$ (as polynomial in $\F_4[x]$) and $p(x)$ divides $x^m-1$.
\end{proof}



\begin{remark}
Theorem \ref{Theorem-binary} allows to construct binary dihedral codes with prescribed minimum distance and with a lower bound on their dimensions. With {\sc Magma} we did some calculations and we found some codes with the best-known minimum distance for their dimension (see \cite{Grassl}). For example:
\begin{itemize}
\item the $D_{22}$-code generated by
$$p(x)=x^5 + \alpha x^4 + x^3 + x^2 + \alpha^2 x + 1,$$
which is a $[22, 12, 6]$ code;
\item the $D_{66}$-code generated by
$$p(x)=x^{15} + \alpha x^{14} + x^{13} + x^{11} + x^{10} + \alpha^2 x^9 + \alpha^2 x^8 +$$ $$+\alpha x^7 + \alpha  x^6 + x^5 + x^4 + x^2 + \alpha^2 x + 1,$$
which is a $[ 66, 33, 12 ]$ code;
\item the $D_{86}$-code generated by
$$p(x)=x^{21} + \alpha x^{20} + \alpha x^{18} + \alpha x^{17} + \alpha x^{16} + x^{15} + \alpha^2x^{11} + \alpha x^{10} +$$ $$+ x^6 + \alpha^2x^5 + \alpha^2x^4 + \alpha^2x^3 + \alpha^2x + 1,$$
which is a  $[86,44,15]$ code;
\item the $D_{86}$-code generated by
$$p(x)=x^7 + x^6 + \alpha x^5 + \alpha^2 x^2 + x + 1.$$
which is a $[86,72,5]$ code.
\end{itemize}
Note that the dimension is always $2(m-\deg p(x))$.
\end{remark}

\section{Open problems}\label{SectionOP}

In the paper we defined dihedral codes with prescribed minimum distance and dimension. However, it would be interesting to prove better bounds on the dimension and to give a construction allowing to control it. In particular, an open problem is the following. \\

\noindent \textbf{Problem 1.} When does equality hold in Corollary \ref{corollary-dimension}? Can the bound be improved by adding some conditions on $p(x)$?\\

Related to that, there is also the problem of a canonical generator. Actually, in the theory of BCH codes we can read the dimension from the degree of the generator polynomial (the one of lowest degree). It does not seem to exist an analogue for dihedral codes. About dual codes, many questions stand open. The main one is about the relation between the generators of code. Another important problem, related to the use of dihedral codes in cryptography is the following.\\

\noindent \textbf{Problem 2.} Is there any efficient decoding algorithm for dihedral codes, based on the algebraic structure proved in the paper?\\

Finally, it would be interesting to extend the results to other group codes, at least in the checkable case.




\end{document}